\tikzexternaldisable \setlength{\marginparwidth}{40mm}
\NewDocumentCommand\scalprod{mm}{\ensuremath{\left<#1,#2\right>}}
\NewDocumentCommand\oddSets{}{\ensuremath{\mathcal{U}}}
\newlength{\tagblockwidth}
\newlength{\tagblocksep}
\noindent \parbox{\tagblockwidth}{\vspace{-1ex}\begin{align}\tag{#1}#2\end{align}}
  \hfill \begin{minipage}{\linewidth-\tagblockwidth-\tagblocksep-5mm}\vspace{-1ex}
\newtheorem{theorem}{Theorem}
\newtheorem{proposition}[theorem]{Proposition}
\newtheorem{corollary}[theorem]{Corollary}
\newtheorem{condition}[theorem]{Condition}
\title{A Note on Matchings Constructed during Edmonds' Weighted Perfect Matching Algorithm}
\author[1]{Volker Kaibel}
\affil[1]{Otto-von-Guericke-Universität Magdeburg, kaibel@ovgu.de}
\author[2]{Matthias Walter}
\affil[2]{RWTH Aachen University, walter@or.rwth-aachen.de}
\begin{document}

\maketitle
\thispagestyle{empty}

\begin{abstract}
  We reprove that all the matchings constructed during Edmonds' weighted perfect
  matching algorithm are optimal among those of the same cardinality 
  (provided that certain mild restrictions are obeyed on the choices the algorithm makes).
  We conclude that in order to solve a weighted matching problem it is not needed
  to solve a weighted perfect matching problem in an auxiliary graph of doubled size.
  This result was known before, e.g., posed as an exercise in see Lawler's book from 1976,
  but is not present in several modern books on combinatorial optimization.
\end{abstract}

\section{Introduction}

In this note we consider the minimum weight (perfect) matching problem.
An instance consists of an undirected simple graph $G = (V,E)$
with node set $V$ and edge set $E$ as well as weights $w \in \R^E$.
A \emph{matching} $M \subseteq E$ is a set of edges no two of which share a node.
A matching $M$ is \emph{perfect} if every node is covered, i.e., $|M| = \frac{1}{2} |V|$.
The \emph{matching number} $\nu(G)$ is the maximum cardinality of a matching in $G$.
The objective is to find a (perfect) matching $\widehat{M}$ of minimum weight 
$w(\widehat{M}) = \sum\limits_{e \in M} w_e$.
We call a matching \emph{cardinality-optimal} if it has minimum weight among all matchings of
the same cardinality.

\medskip

The set of edges incident to a node $v \in V$ will be denoted by $\delta(v)$
and for some node set $U \subseteq V$ the set of edges inside $U$ is $E[U]$.
The set $\oddSets = \setdef{ U \subseteq V }[ |U| \text{ odd} ]$ 
consists of all node subsets of odd cardinality.
The characteristic vector $\chi(M) \in \setdef{0,1}^E$ of a subset $M \subseteq E$ is the vector
with $\chi(M)_e = 1$ if and only if $e \in M$.
For a vector $x \in \R^E$ and some set $F \subseteq E$ we write $x(F)$ for $\sum_{e \in F} x_e$.
We assume that the reader is familiar with the matching algorithm, basic matching theory, 
and with linear programming concepts.
For further material, see \cite{Schrijver86}.

\medskip

In 1965, Edmonds devised a combinatorial algorithm for cardinality-maximum matching \cite{Edmonds65a}.
In the same year he carried out the analysis of the matching polyhedron yielding a version of the algorithm that
can solve weighted matching problems as well (see \cite{Edmonds65b}).
Lawler poses an exercise in his book from 1976 (see Problem~8.2~in~\cite{Lawler76}): Prove that the intermediate matchings
constructed in the algorithm are cardinality-optimal.
His version of Edmonds' algorithm is based on the so-called blossom inequalities
which makes this problem easy to solve.
In contrast to this, more modern standard books on combinatorial optimization (e.g., \cite{Schrijver03}, \cite{KorteV12})
are based on the cut inequalities (which are only valid for the perfect-matching polytope),
and for which the technical effort for a proof based on linear programming is higher.

We use the the version of Edmonds' algorithm presented in Chapter~26 of Schrijver's book \cite{Schrijver03}
and only briefly recall the algorithm and important properties in Section~\ref{SectionAlgorithmProperties}.
We then discuss the matching polytope, adjacency properties, and linear programming formulations.
In Section~\ref{SectionIntermediateMatchings} we present the proof of the mentioned result (Theorem~\ref{TheoremIntermediateCardOpt}) stating
that the intermediate matchings
constructed in the algorithm are cardinality-optimal.
We provide two proofs, one by constructing a dual-feasible solution, 
and another one by extending the current graph.
In both cases we prove optimality using complementary slackness.
We close by discussing the imposed conditions as well as applications of our main result.

Before going into the details of the algorithm,
observe that we can assume the edge weights $w$ to be nonnegative,
since otherwise we can add a large constant $C$ to every weight.
Then the weight of every matching of cardinality $k$ is increased by the same value $k \cdot C$,
leaving the ordering of these matchings intact.

\pagebreak[3]
\section{Edmonds' Algorithm}
\label{SectionAlgorithmProperties}

In order to setup the notation we present the algorithm according 
to Schrijver's book \cite{Schrijver03} skipping the analysis.

The input is an undirected graph $G = (V,E)$ with nonnegative edge weights $w \in \R_+^E$.
The algorithm is primal-dual where the primal state is given by a current matching $M$ in $G$
(with primal variable $x = \chi(M)$).
The dual state is given by a laminar collection $\Omega \subseteq \oddSets$
(i.e., for each $U,U' \in \Omega$ we either have $U \cap U' = \emptyset$ or $U \subseteq U'$ or $U' \subseteq U$) 
with associated dual variables $\pi : \oddSets \to \R$ having support only in $\Omega$, i.e., $\pi(U) = 0$ for all $U \notin \Omega$.
We can state the constraints for $\pi$ as follows:
\begin{align}
  \pi(U)              &\geq 0     && \forall U \text{ with } |U| \geq 3 \label{EquationMatchingDualCutNonnegative} \\
  \sum_{U: e \in \delta(U)} \pi(U) &\leq w_e && \forall e = \setdef{u,v} \in E \label{EquationMatchingDualCutEdge}
\end{align}
All singleton sets $\setdef{v}$ will be contained in the collection $\Omega$ throughout the algorithm
and hence the inclusion-wise maximal sets in $\Omega$, denoted by $\Omega_{\max}$, form a partition of $V$.
The main auxiliary graph is $G'$ which is obtained by 
only considering edges $\setdef{u,v} \in E$ which satisfy \eqref{EquationMatchingDualCutEdge} with equality
and shrinking all node sets $U \in \Omega_{\max}$ to single nodes.

\smallskip

The algorithm initializes $M := \emptyset$ and $\pi := \zerovec[\oddSets]$.
Throughout the algorithm $\pi$ will always be feasible and complementary slackness will always
be preserved. As soon as the matching $M$ becomes feasible (i.e., a perfect matching)
it is also optimal.

\smallskip

The algorithm attempts to do primal updates or, if these are not possible, dual updates
and repeats until a perfect matching is found or it asserts that no perfect matching exists.

For the primal update, the set $X \subset \Omega_{\max}$ of unmatched nodes in $G'$ is considered.
We try to find an $M$-alternating walk in $G'$ from a node in $X$ to another node in $X$.
In case this walk is a path, $M$ gets augmented, otherwise a blossom is found and its associated odd set $U$ is then shrunken
and added to $\Omega$.

In case no such walk is found, the disjoint node sets $\mathcal{S}, \mathcal{T} \subseteq \Omega_{\max}$ of $G'$
are computed which contain the nodes to which $G'$ has an odd (resp. even) length
$M$-alternating path starting from any node in $X$.
Then the $\pi$-values of those nodes $U$ in $\mathcal{S}$ (resp. $\mathcal{T}$) are decreased (resp. increased)
by the largest $\alpha$ such that the constraints 
\eqref{EquationMatchingDualCutNonnegative} and \eqref{EquationMatchingDualCutEdge} are still satisfied.
If $\alpha$ can be chosen arbitrarily large, then the algorithm asserts that $G$ has no perfect matching.
At the end of the phase all $U \in \Omega_{\max}$ with $|U| \geq 3$ for which $\pi(U)$ became
$0$ are deshrunken and are removed from $\Omega$.

\bigskip

The next proposition states that the case of arbitrarily large $\alpha$
occurs as late as possible.

\begin{proposition}
  \label{TheoremTermination}
  When the algorithm asserts that $G$ has no perfect matching, then the current matching has maximum cardinality,
  that is, $|\widetilde{M}| = \nu(G)$.
\end{proposition}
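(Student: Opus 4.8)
The plan is to use the structure present at the moment of failure to exhibit a node set $S\subseteq V$ that witnesses maximality of the current matching $\widetilde{M}$ through the Tutte--Berge formula $|V|-2\nu(G)=\max_{S\subseteq V}\bigl(q(S)-|S|\bigr)$, where $q(S)$ denotes the number of odd-cardinality connected components of $G-S$. Since every matching $M$ satisfies $|V|-2|M|\ge|V|-2\nu(G)\ge q(S)-|S|$ for every $S$, it suffices to produce one set $S$ with $q(S)-|S|\ge|V|-2|\widetilde{M}|$; then all these inequalities hold with equality and in particular $|\widetilde{M}|=\nu(G)$.

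First I would extract the combinatorial consequences of $\alpha$ being unboundedly large in the failing dual step. Increasing $\pi$ by $\alpha$ on the members of $\mathcal{T}$ and decreasing it by $\alpha$ on the members of $\mathcal{S}$ changes the left-hand side of \eqref{EquationMatchingDualCutEdge} for an edge $e$ only through those members of $\Omega_{\max}$ lying in $\mathcal{S}\cup\mathcal{T}$ that contain exactly one endpoint of $e$; the net change equals $+2\alpha$ if both endpoints of $e$ lie in distinct members of $\mathcal{T}$, equals $+\alpha$ if one endpoint lies in a member of $\mathcal{T}$ and the other lies in no member of $\mathcal{S}\cup\mathcal{T}$, and is non-positive in every remaining case. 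Moreover \eqref{EquationMatchingDualCutNonnegative} would eventually be violated for any $U\in\mathcal{S}$ with $|U|\ge 3$. Hence the unboundedness of $\alpha$ forces: (i) every member of $\mathcal{S}$ is a singleton; (ii) $G$ has no edge joining two distinct members of $\mathcal{T}$; and (iii) $G$ has no edge joining a member of $\mathcal{T}$ to a node that lies in no member of $\mathcal{S}\cup\mathcal{T}$.

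Now put $S:=\bigcup\mathcal{S}$, so that $|S|=|\mathcal{S}|$ by (i). Each $U\in\mathcal{T}$ is a singleton or a blossom, hence induces a connected subgraph of $G$ and has odd cardinality; and by (ii) and (iii) every edge of $G$ leaving $U$ has its other endpoint in $S$. Therefore each $U\in\mathcal{T}$ is an odd-cardinality component of $G-S$, and distinct members of $\mathcal{T}$ give distinct such components, so $q(S)\ge|\mathcal{T}|$. To relate $|\mathcal{T}|-|\mathcal{S}|$ to the deficiency of $\widetilde{M}$, I would invoke the standard forest structure of the alternating search in $G'$: the set $X$ of exposed nodes of $G'$ is contained in $\mathcal{T}$; every $U\in\mathcal{S}$ is matched (otherwise the alternating path reaching it would augment $\widetilde{M}$) and its matching partner is a member of $\mathcal{T}\setminus X$; and conversely every member of $\mathcal{T}\setminus X$ is matched to a member of $\mathcal{S}$. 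This yields a bijection, whence $|\mathcal{S}|=|\mathcal{T}|-|X|$. Since shrinking blossoms leaves the number of exposed nodes unchanged, $|X|=|V|-2|\widetilde{M}|$, and so $q(S)-|S|\ge|\mathcal{T}|-|\mathcal{S}|=|X|=|V|-2|\widetilde{M}|$, which is exactly what was needed.

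The case distinction in the second paragraph and the forest bookkeeping in the third are routine. The step I expect to require the most care is the translation between the shrunken graph $G'$ and $G$: one must check that each top-level member of $\mathcal{T}$ really spans a connected subgraph of $G$ all of whose outgoing edges land in $S$, so that the $G'$-level non-adjacencies (ii) and (iii) genuinely produce the claimed odd-cardinality components of $G-S$, and that the count of exposed nodes is preserved by shrinking. Should one wish to avoid the Tutte--Berge formula, the vector that is $+1$ on $\mathcal{T}$, $-1$ on $\mathcal{S}$ and $0$ elsewhere is a feasible recession direction of the dual of the perfect-matching linear program with objective value $|X|>0$ (positive, since otherwise the algorithm would have stopped with a perfect matching), which already certifies that $G$ has no perfect matching; but it is the set $S$ that upgrades this to the claimed statement about $\nu(G)$.
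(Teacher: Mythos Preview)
Your argument is correct. Both proofs open with the same structural observations extracted from the unboundedness of $\alpha$ (your (i)--(iii); the paper states the analogue of (ii)/(iii) only for $G'$ and leaves the lift to $G$ implicit), but then diverge. The paper argues by contradiction: it assumes an $\widetilde{M}$-augmenting path $P$ exists in $G$, observes that $P$ starts inside a member of $\mathcal{T}$, and that every edge leaving a $\mathcal{T}$-member lands in $\mathcal{S}$; tracing $P$ then shows it can never reach a second exposed node, a contradiction. You instead construct an explicit Tutte--Berge witness $S=\bigcup\mathcal{S}$ and count $q(S)-|S|\ge|\mathcal{T}|-|\mathcal{S}|=|X|=|V|-2|\widetilde{M}|$. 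What your route buys is a reusable certificate and a transparent link to the Gallai--Edmonds structure (your $S$ is exactly the set of inner vertices of the alternating forest); what the paper's route buys is brevity and independence from Tutte--Berge as a black box. The point you single out as delicate---that each $U\in\mathcal{T}$ really is a connected odd piece of $G-S$---is indeed the one step requiring care, and it follows from the recursive construction of blossoms via odd cycles of tight edges; the paper's proof has to negotiate the same $G'$-to-$G$ translation when tracing $P$, but does so less explicitly.
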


\begin{proof}
  Suppose we are in the situation that $\alpha$ can be chosen arbitrarily large.
  
  First observe that the nodes in $\mathcal{S}$ must be singletons, i.e., are not shrunken,
  since their $\pi$-values are decreased by $\alpha$ and for non-singletons Inequality~\eqref{EquationMatchingDualCutNonnegative}
  would restrict the decrease.
  Furthermore, they are matched to nodes in $\mathcal{T}$ by definition of $\mathcal{S}$ and $\mathcal{T}$.

  Second, nodes in $\mathcal{T}$ only have neighbors (in $G'$) in $\mathcal{S}$
  since their $\pi$-values are increased by $\alpha$, but Inequality~\eqref{EquationMatchingDualCutEdge} for the edge in question
  does not restrict $\alpha$.

  Now assume, for the sake of contradiction, that $|\widetilde{M}| < \nu(G)$ holds and hence,
  there exists a $\widetilde{M}$-augmenting path $P$ in $G$ connecting some $\widetilde{M}$-exposed nodes $s,t \in X$.

  From this it follows that every $\widetilde{M}$-edge of $P$ is either an edge in $G'$ connecting a node from $\mathcal{S}$
  with a node from $\mathcal{T}$, or is an edge inside a blossom (whose representative node is in $\mathcal{T}$).
  But this already contradicts the fact that $P$ is $\widetilde{M}$-augmenting since there is no edge between
  two nodes from $\mathcal{T}$.
\end{proof}

This version of the algorithm clearly satisfies the following condition which is important as we will see later.

\begin{condition}
  \label{ConditionSameInit}
  The dual values are initialized by $\pi(U) := 0$ for $U$ with $|U| \geq 3$
  and $\pi(\setdef{v}) := \beta$ for all $U = \setdef{v}$ with $v \in V$ for some fixed $\beta \in \R$.

  Furthermore, in a dual update step, the dual values $\pi(U)$ are increased
  by the same amount for all $U \in \mathcal{T}$ for which the values are increased.
\end{condition}

\medskip

Let for every node $v \in V$ the accumulated dual value be denoted by $\pi^*(v) := \sum_{U \in \oddSets, v \in U} \pi(U)$
and let $\pi^*_{\max} := \max_{v \in V} \pi^*(v)$ be their maximum.
Let at any stage $\widetilde{M}$ be the matching that one obtains from the current matching $M$
by iteratively deshrinking the sets in $\Omega_{\max}$.

\begin{proposition}
  \label{TheoremProperties}
  At any stage of the algorithm, the matching $\widetilde{M}$ and the dual values $\pi$
  satisfy these properties:
  \begin{enumerate}[(i)]
  \item\label{TheoremPropertiesBlossom}
    If $\pi(U) > 0$ for some set $U \in \oddSets{}$ with $|U| \geq 3$,
    then $U \in \Omega$ satisfies 
    \begin{align}
      \left|\widetilde{M} \cap E[U]\right| = \frac{1}{2}(|U|-1) \label{EquationNearPerfect} \ .
    \end{align}
  \item\label{TheoremPropertiesAugmentation}
    Shrinking and deshrinking do not modify the current matching $\widetilde{M}$.
    Furthermore, augmenting $M$ in $G'$ corresponds to an augmentation of $\widetilde{M}$
    along a single path.
  \item\label{TheoremPropertiesMatchedRemain}
    Matched nodes will always remain matched.
  \item\label{TheoremPropertiesUnmatchedIncrease}
    The dual values $\pi(U)$ for \emph{unmatched} nodes $U$ of $G'$
    are always increased in a dual update.
  \end{enumerate}
\end{proposition}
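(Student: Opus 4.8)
The plan is to prove the four statements simultaneously, by induction on the number of elementary steps performed by the algorithm, where an elementary step is a primal augmentation, a shrinking, a dual update, or the deshrinking of a set whose dual value has just dropped to $0$. Before the first step we have $M = \emptyset$, the collection $\Omega$ contains only singletons, and $\pi(U) = 0$ for every $U$ with $|U| \geq 3$; hence (i) is vacuously true, (ii) and (iii) hold trivially, and (iv) is not yet applicable. For the inductive step it then remains to check that each type of elementary step preserves all four properties.

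For (ii) and (iii): a shrinking removes from $M$ exactly the edges lying inside the newly shrunk blossom, while the corresponding deshrinking reinserts them, so in both cases the deshrunk matching $\widetilde{M}$ is unchanged, which is the first assertion of (ii). An augmentation is carried out along an $M$-alternating path in $G'$ joining two exposed nodes; expanding the blossoms encountered along this path, and in each of them rotating the near-perfect internal matching so that the exposed base moves to the node at which the path enters, converts it into a single $\widetilde{M}$-augmenting path in $G$, which is the second assertion of (ii). Statement (iii) then follows: dual updates do not touch $M$, shrinking and deshrinking do not touch $\widetilde{M}$ by what was just said, and flipping $\widetilde{M}$ along an augmenting path turns exposed endpoints into matched nodes while keeping every previously matched node matched, so no matched node ever becomes exposed.

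For (i): since the support of $\pi$ is contained in $\Omega$ by construction, it suffices to show that every $U \in \Omega$ with $|U| \geq 3$ satisfies \eqref{EquationNearPerfect}, the sets of size at least $3$ in $\Omega$ being precisely the blossoms that have been shrunk and not yet deshrunk. At the moment a blossom $B$ is shrunk, the matching it carries on its constituent maximal sets leaves exactly its base exposed; combining this with the inductive hypothesis for the constituent sub-blossoms — adding the $\tfrac12(k-1)$ matching edges of the blossom cycle on its $k$ constituents to the internal near-perfect matchings of the constituents — shows $|\widetilde{M} \cap E[B]| = \tfrac12(|B|-1)$, so \eqref{EquationNearPerfect} holds when $B$ enters $\Omega$. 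Afterwards $B$ remains in $\Omega$ as long as $\pi(B) > 0$ (a set of size $\geq 3$ is removed from $\Omega$ only once its dual value reaches $0$), and none of the later steps can spoil \eqref{EquationNearPerfect} for $B$: dual updates leave $\widetilde{M}$ unchanged, while augmentations and further shrinkings interact with $B$ only through its representative node in $G'$ and therefore modify $E[B]$ at most by rotating the internal matching, which keeps it near-perfect.

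For (iv): in a dual update every exposed node of $G'$ is reachable from the set $X$ of exposed nodes by a trivial, even-length $M$-alternating path, so $X \subseteq \mathcal{T}$; the dual update raises $\pi(U)$ by the chosen step length $\alpha$ for every $U \in \mathcal{T}$, and $\alpha > 0$ because otherwise a primal step would have been available instead, so the dual value of every unmatched node of $G'$ strictly increases. The main obstacle is the argument underlying (ii) and, through it, the last part of (i): one has to verify carefully that expanding an $M$-alternating path through the shrunk blossoms it meets, rotating each internal near-perfect matching at the entry node, again yields a single alternating path of the right parity and leaves $|\widetilde{M} \cap E[B]|$ equal to $\tfrac12(|B|-1)$. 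This is the classical correctness core of Edmonds' algorithm; we only sketch it here and refer to \cite{Schrijver03} for the details.
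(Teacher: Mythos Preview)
The paper does not actually supply a proof of this proposition: it is stated as a list of well-known invariants of Edmonds' algorithm and immediately used, with the implicit understanding that the reader knows the algorithm from \cite{Schrijver03}. So there is no ``paper's own proof'' to compare against.

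Your inductive sketch is the standard way one would justify these invariants and is essentially correct. A couple of minor remarks. In (iv) you assert $\alpha > 0$ ``because otherwise a primal step would have been available''; this is true but deserves one more sentence: if $\alpha = 0$ then either some $U \in \mathcal{S}$ with $|U| \geq 3$ already has $\pi(U) = 0$ (impossible, it would have been deshrunk) or some edge from $\mathcal{T}$ to $V \setminus \mathcal{S}$ is already tight (impossible, it would have extended the alternating forest or produced a blossom/augmenting walk). In (i), your claim that later augmentations ``modify $E[B]$ at most by rotating the internal matching'' glosses over the case where $B$ itself is an \emph{endpoint} of the augmenting path in $G'$ (a blossom can be exposed if its base is exposed); the rotation argument still applies, but it is worth saying so explicitly. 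Finally, you yourself flag the main dependency: the expansion of an $M$-alternating path in $G'$ to a single $\widetilde{M}$-alternating path in $G$ is exactly the correctness core of Edmonds' algorithm, and you are right to cite \cite{Schrijver03} rather than reprove it.
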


The next corollary follows readily from 
Proposition~\ref{TheoremProperties}~\eqref{TheoremPropertiesMatchedRemain}~and~\eqref{TheoremPropertiesUnmatchedIncrease}.

\begin{corollary}
  \label{TheoremUnmatchedMaximumPi}
  Given Condition~\ref{ConditionSameInit},
  at any stage $\pi^*(v) = \pi^*_{\max}$ holds for every unmatched node $v \in X$.
\end{corollary}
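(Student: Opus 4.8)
The plan is to prove the statement by induction on the steps performed by the algorithm, carrying along — besides the dual values $\pi$ — the set $R \subseteq V$ of nodes left exposed by $\widetilde{M}$. For the base case, observe that by Condition~\ref{ConditionSameInit} the algorithm starts with $\pi$ equal to $\beta$ on every singleton $\setdef{v}$ and equal to $0$ on every other set, so $\pi^*(v) = \beta$ for all $v \in V$; since $\widetilde{M} = \emptyset$ we have $R = V$ and $\pi^*_{\max} = \beta$, and the claim holds trivially.

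For the inductive step I would distinguish the four kinds of steps. A shrinking or deshrinking leaves $\pi$ untouched and, by Proposition~\ref{TheoremProperties}~\eqref{TheoremPropertiesAugmentation}, leaves $\widetilde{M}$ untouched; hence none of $\pi^*$, $\pi^*_{\max}$ or $R$ changes and the invariant is preserved. A primal augmentation also leaves $\pi$ (hence $\pi^*$ and $\pi^*_{\max}$) untouched, and by Proposition~\ref{TheoremProperties}~\eqref{TheoremPropertiesAugmentation} it removes exactly the two endpoints of one augmenting path from $R$; by Proposition~\ref{TheoremProperties}~\eqref{TheoremPropertiesMatchedRemain} no node re-enters $R$, so every node still in $R$ was already in $R$ before the step and still satisfies $\pi^*(v) = \pi^*_{\max}$ by the induction hypothesis.

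It remains to treat a dual update, which is the only step that changes $\pi$. By Condition~\ref{ConditionSameInit} there is a single amount $\alpha \ge 0$ by which $\pi(U)$ is raised for every $U \in \mathcal{T}$ and lowered for every $U \in \mathcal{S}$; as $\mathcal{S}, \mathcal{T} \subseteq \Omega_{\max}$ and each node of $V$ lies in exactly one set of $\Omega_{\max}$, the value $\pi^*(v)$ changes by $+\alpha$, by $-\alpha$, or not at all, according to whether the maximal set containing $v$ lies in $\mathcal{T}$, in $\mathcal{S}$, or in neither. In particular no $\pi^*$-value rises by more than $\alpha$, so $\pi^*_{\max}$ rises by at most $\alpha$. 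On the other hand, the step changes neither $M$ nor, by Proposition~\ref{TheoremProperties}~\eqref{TheoremPropertiesAugmentation}, $\widetilde{M}$, so $R$ is unchanged; every $v \in R$ lies in a set of $\Omega_{\max}$ that is unmatched in $G'$ (see below), and such a set belongs to $\mathcal{T}$ by Proposition~\ref{TheoremProperties}~\eqref{TheoremPropertiesUnmatchedIncrease}, so $\pi^*(v)$ rises by exactly $\alpha$. Since $R \neq \emptyset$ whenever a dual update is performed (otherwise $\widetilde{M}$, and hence $M$, would be perfect and the algorithm would have stopped) and since every node of $R$ attained the old value $\pi^*_{\max}$ by the induction hypothesis, $\pi^*_{\max}$ in fact rises by exactly $\alpha$; hence every $v \in R$ again satisfies $\pi^*(v) = \pi^*_{\max}$ afterwards.

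The one point needing care is the assertion that every $\widetilde{M}$-exposed node lies in an $\Omega_{\max}$-set that is unmatched in $G'$ — equivalently, that a node of $G'$ which \emph{is} matched deshrinks to a set of $\widetilde{M}$-matched nodes. For singleton nodes this is immediate; for a shrunken blossom $U$ it amounts to the standard fact that $\widetilde{M} \cap E[U]$ covers every node of $U$ except its base, which is then covered by the matching edge at $U$ in $G'$. This is precisely Proposition~\ref{TheoremProperties}~\eqref{TheoremPropertiesBlossom} whenever $\pi(U) > 0$, and for the remaining (degenerate) case it follows from the defining alternating structure of a blossom together with Proposition~\ref{TheoremProperties}~\eqref{TheoremPropertiesAugmentation}. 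I expect this bookkeeping about blossom bases, rather than the arithmetic with $\pi^*$, to be the only genuinely delicate step.
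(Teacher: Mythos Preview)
Your argument is correct and is exactly the route the paper indicates: the paper gives no proof beyond the remark that the corollary ``follows readily from Proposition~\ref{TheoremProperties}~\eqref{TheoremPropertiesMatchedRemain} and~\eqref{TheoremPropertiesUnmatchedIncrease}'', and your induction on algorithm steps is precisely how one unpacks that remark together with Condition~\ref{ConditionSameInit}. Your write-up is in fact more careful than the paper's, since you make explicit the bookkeeping that an $\widetilde{M}$-exposed node of $G$ lies in an unmatched set of $G'$, a point the paper leaves implicit.
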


\pagebreak[3]
\section{The Matching Polytope}
\label{SectionMatchingPolytope}

Let $P(G) = \conv\setdef{\chi(M)}[ M \text{ matching in } G]$ be the \emph{matching polytope} of a graph $G$.
Here $\chi(M) \in \setdef{0,1}^E$ denotes the characteristic vector which has $\chi(M)_e = 1$ for $e \in E$ 
if and only if $e \in M$.
Edmonds \cite{Edmonds65b} gave the following outer description of the polytope $P(G)$:
\begin{align}
  x_e           &\geq 0                   & \forall e \in E \label{EquationMatchingPrimalNonnegative} \\
  x(\delta(v))  &\leq 1                   & \forall v \in V \label{EquationMatchingPrimalStar} \\
  x(E[U])       &\leq \frac{|U|-1}{2}     & \forall U \in \oddSets \label{EquationMatchingPrimalSetBlossom}
\end{align}
Chv{\'a}tal \cite{Chvatal75} characterized adjacency in $P(G)$:
\begin{proposition}
  \label{TheoremAdjacency}
  The vertices corresponding to two matchings $M$ and $M'$ are adjacent in $P(G)$ if and only if
  $M \Delta M'$ is connected.
\end{proposition}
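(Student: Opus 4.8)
The plan is to use the standard adjacency criterion: two distinct vertices $u,v$ of a polytope are adjacent if and only if, whenever $\frac{1}{2}(u+v)$ is written as a convex combination of vertices, the only vertices occurring are $u$ and $v$ (equivalently, the smallest face of the polytope containing $\frac{1}{2}(u+v)$ is the segment $\conv\{u,v\}$). We may assume $M\neq M'$. Recall two structural facts we will use freely: $M\Delta M'$ is a pairwise vertex-disjoint union of paths and even cycles that alternate between $M$-edges and $M'$-edges; and every node meeting an edge of $M\cap M'$ meets no edge of $M\Delta M'$, so $M\cup M'$ is the disjoint union of $M\cap M'$ and $M\Delta M'$.

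For the direction \emph{``$\chi(M)$ and $\chi(M')$ adjacent $\Rightarrow M\Delta M'$ connected''} I argue by contraposition. Let $C_1,\dots,C_r$ be the components of $M\Delta M'$ with $r\geq2$, identify each $C_j$ with its edge set, and set $M_1:=M\Delta C_1$ and $M_2:=M\Delta(C_2\cup\dots\cup C_r)$. Flipping any union of the pairwise vertex-disjoint alternating components turns one matching into another, so $M_1,M_2$ are matchings; as both $C_1$ and $C_2\cup\dots\cup C_r$ are nonempty (this uses $r\geq2$), neither $M_1$ nor $M_2$ equals $M$ or $M'$. A coordinatewise check --- every edge of $M\cap M'$ lies in both $M_1$ and $M_2$, no edge outside $M\cup M'$ lies in either, and on each component $C_j$ one of $M_1,M_2$ carries the $M$-part and the other the $M'$-part --- yields $\chi(M_1)+\chi(M_2)=\chi(M)+\chi(M')$ with $M_1\neq M_2$. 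Thus $\frac{1}{2}(\chi(M)+\chi(M'))$ has a convex representation by vertices avoiding $\chi(M)$ and $\chi(M')$, so these vertices are not adjacent.

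For the converse, assume $M\Delta M'$ is connected, hence a single path or even cycle $C$, and suppose $\frac{1}{2}(\chi(M)+\chi(M'))=\sum_i\lambda_i\chi(M_i)$ with matchings $M_i$ and coefficients $\lambda_i>0$ summing to $1$. Evaluating on an edge $e\notin M\cup M'$ (left side $0$, each $\chi(M_i)_e\geq0$ by \eqref{EquationMatchingPrimalNonnegative}) forces $e\notin M_i$ for all $i$; evaluating on $e\in M\cap M'$ (left side $1$, and $\chi(M_i)(\delta(v))\leq1$ as $M_i$ is a matching, cf.\ \eqref{EquationMatchingPrimalStar}) forces $e\in M_i$. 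Hence $M\cap M'\subseteq M_i\subseteq M\cup M'$, so $M_i=(M\cap M')\cup(M_i\cap C)$. Now take any node $v$ internal to $C$ (degree $2$ in $C$), with incident $C$-edges $e,f$; since $v$ meets no edge of $M\cap M'$ we get $M_i\cap\delta(v)\subseteq\{e,f\}$, so $\chi(M_i)_e+\chi(M_i)_f=\chi(M_i)(\delta(v))\leq1$, and summing against the $\lambda_i$ gives $\sum_i\lambda_i(\chi(M_i)_e+\chi(M_i)_f)=1=\sum_i\lambda_i$, forcing $\chi(M_i)_e+\chi(M_i)_f=1$ for every $i$: each $M_i$ uses exactly one of the two $C$-edges at every internal node. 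Propagating this along $C$ shows $M_i\cap C$ equals one of the two alternating matchings of $C$, namely $M\cap C$ or $M'\cap C$, whence $M_i=M$ or $M_i=M'$. So only $\chi(M)$ and $\chi(M')$ occur in the representation, and they are adjacent.

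The routine parts are the component-flipping and the coordinatewise identity of the first direction. The heart of the argument is the second direction, and the one place a bare cardinality count would fail (a long path has maximum matchings that are not alternating) is the internal-node observation forcing each $M_i$ to restrict to an alternating matching of $C$. I expect the only steps needing genuine care to be the reduction $M\cap M'\subseteq M_i\subseteq M\cup M'$ and checking that $M_1,M_2$ are matchings distinct from $M$ and $M'$.
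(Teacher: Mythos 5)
The paper does not prove this proposition at all: it is quoted as a known result of Chv\'atal with a citation to \cite{Chvatal75}, so there is no in-paper proof to compare against. Your argument is a correct, self-contained proof of the statement, and it is essentially the classical one: the non-adjacency direction splits $M\Delta M'$ into two nonempty unions of components and exhibits the midpoint as the average of two other matchings, and the adjacency direction pins down every matching in a convex representation of the midpoint via the inclusions $M\cap M'\subseteq M_i\subseteq M\cup M'$ and the degree argument at internal nodes of the single alternating path or cycle. All steps check out, including the propagation showing $M_i\cap C\in\{M\cap C,\,M'\cap C\}$; the only cosmetic remark is that for $e\in M\cap M'$ the forcing $e\in M_i$ follows already from $\chi(M_i)_e\leq 1$ rather than from the degree constraint you cite.
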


\begin{corollary}
  The sequence of matchings constructed during the algorithm corresponds to the vertex sequence
  of a path in the matching polytope. The vertices correspond to matchings of strictly increasing cardinality.
\end{corollary}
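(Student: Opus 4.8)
The plan is to read the claim off directly from Proposition~\ref{TheoremProperties}~\eqref{TheoremPropertiesAugmentation} together with Chv\'atal's adjacency criterion (Proposition~\ref{TheoremAdjacency}).

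First I would pin down which sequence of matchings is meant. The primal operations performed by the algorithm are augmentations, shrinkings, and deshrinkings; by Proposition~\ref{TheoremProperties}~\eqref{TheoremPropertiesAugmentation} the shrinking and deshrinking steps leave $\widetilde{M}$ unchanged, and dual updates do not touch the matching either. Hence the relevant sequence is the sequence $\widetilde{M}_0 = \emptyset, \widetilde{M}_1, \widetilde{M}_2, \dots$ of matchings obtained immediately after the successive augmentation steps.

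Next I would analyze one step from $\widetilde{M}_i$ to $\widetilde{M}_{i+1}$. By Proposition~\ref{TheoremProperties}~\eqref{TheoremPropertiesAugmentation}, an augmentation of $M$ in $G'$ corresponds to augmenting $\widetilde{M}_i$ along a single $\widetilde{M}_i$-augmenting path $P$ in $G$. Consequently $\widetilde{M}_i \Delta \widetilde{M}_{i+1}$ consists exactly of the edges of $P$ and is therefore connected, so Proposition~\ref{TheoremAdjacency} shows that the vertices $\chi(\widetilde{M}_i)$ and $\chi(\widetilde{M}_{i+1})$ are adjacent in $P(G)$. Concatenating over all steps, $\chi(\widetilde{M}_0), \chi(\widetilde{M}_1), \dots$ is a walk in the $1$-skeleton of $P(G)$.

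Finally, augmenting along a single path increases the cardinality by exactly one, so $|\widetilde{M}_0| < |\widetilde{M}_1| < \cdots$, and in particular the $\widetilde{M}_i$ are pairwise distinct; hence the walk visits each vertex only once and is actually a path. The only point requiring care — and it is exactly what Proposition~\ref{TheoremProperties}~\eqref{TheoremPropertiesAugmentation} supplies — is that an augmentation carried out in the shrunken graph $G'$ unfolds into a \emph{single} augmenting path in $G$ rather than into a more complicated symmetric difference of matchings; granting that, the argument is immediate.
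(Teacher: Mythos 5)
Your proposal is correct and follows essentially the same route as the paper, which simply cites Proposition~\ref{TheoremProperties}~\eqref{TheoremPropertiesBlossom}~and~\eqref{TheoremPropertiesAugmentation} (relying implicitly on Chv\'atal's adjacency criterion, Proposition~\ref{TheoremAdjacency}, stated just before the corollary). You have merely spelled out the details: augmentation along a single path makes the symmetric difference connected, hence consecutive vertices adjacent, and the strictly increasing cardinalities make the walk a path.
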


\begin{proof}
  This follows directly from 
  Proposition~\ref{TheoremProperties}~\eqref{TheoremPropertiesBlossom}~and~\eqref{TheoremPropertiesAugmentation}.
\end{proof}

Another direct consequence is that  $|M| - |M'| \in \setdef{-1,0,+1}$ holds 
for adjacent vertices $\chi(M)$, $\chi(M')$.
Hence, for every $k \in \Z_+$ we have that the convex hull $P_k$ of all matchings of cardinality $k$
is equal to $P$ intersected with all $x \in \R^E$ satisfying
\begin{align}
  x(E) &= k \label{EquationMatchingPrimalCardinality} \ ,
\end{align}
since such a hyperplane does not intersect the relative interior of any edge of $P(G)$.
We now state the corresponding linear program

\begin{tagblock}{P}{\label{MatchingPrimalCardinality}}
  \begin{align*}
    \min
    \scalprod{w}{x} 
    \text{ s.t. \eqref{EquationMatchingPrimalNonnegative}, \eqref{EquationMatchingPrimalStar},
    \eqref{EquationMatchingPrimalSetBlossom}, and \eqref{EquationMatchingPrimalCardinality} }
  \end{align*}
\end{tagblock}
and its dual:
\begin{tagblock}{D}{\label{MatchingDualCardinality}}
  \begin{flalign}
    \min
    & ~\mathrlap{\sum\limits_{v \in V} y_v 
        + \sum\limits_{U \in \,\oddSets{}} \frac{|U|-1}{2} z_U
        + k \gamma} \nonumber \\
    \text{s.t.} 
    && y_v + y_w + \sum\limits_{\substack{U \in \,\oddSets\\e \in E[U]}} z_U + \gamma
            & \leq w_e  && \forall e = \setdef{v,w} \in E && \label{EquationMatchingDualSetEdge} \\
    && y_v  & \leq 0    && \forall v \in V && \label{EquationMatchingDualSetNodeNonpositive} \\
    && z_U  & \leq 0    && \forall U \in \oddSets \label{EquationMatchingDualSetBlossomNonpositive}
  \end{flalign}
\end{tagblock}
Here $y$, $z$, $\gamma$ correspond to Inequalities 
\eqref{EquationMatchingPrimalStar}, \eqref{EquationMatchingPrimalSetBlossom}, and Equation~\eqref{EquationMatchingPrimalCardinality}, respectively.

For the case of perfect matchings, i.e., $k = \frac{1}{2}|V|$, there is an equivalent formulation of $P_k$
using blossom inequalities in \emph{cut form} requiring~\eqref{EquationMatchingPrimalStar},~\eqref{EquationMatchingPrimalCardinality} and 
\begin{align}
  x(\delta(U)) & \geq 1 ~~~~\forall U \in \oddSets \label{EquationMatchingPrimalCutBlossom}
\end{align}
instead of \eqref{EquationMatchingPrimalSetBlossom}.

\pagebreak[3]
\section{Intermediate Matchings}
\label{SectionIntermediateMatchings}

We consider the state of the algorithm in any stage.
Let $k := |\widetilde{M}|$ and consider the two linear programs for finding cardinality-optimal matchings.
Clearly, $x := \chi(\widetilde{M})$ is a feasible solution for \eqref{MatchingPrimalCardinality} since it is a matching of
the right cardinality.
The setup constructed so far prepares us for proving our main result.

\begin{theorem}
  \label{TheoremIntermediateCardOpt}
  Given Condition~\ref{ConditionSameInit},
  every matching $\widetilde{M}$ constructed during Edmonds' weighted matching algorithm
  is cardinality-optimal.
\end{theorem}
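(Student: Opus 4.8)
The plan is to certify that $x := \chi(\widetilde M)$, which as already observed is feasible for \eqref{MatchingPrimalCardinality}, is actually \emph{optimal} for \eqref{MatchingPrimalCardinality}, by exhibiting a feasible solution $(y,z,\gamma)$ of the dual \eqref{MatchingDualCardinality} that together with $x$ obeys the complementary slackness conditions; linear programming duality then makes both $x$ and $(y,z,\gamma)$ optimal, and optimality of $x$ is exactly cardinality-optimality of $\widetilde M$. The dual solution is manufactured by translating the algorithm's current \emph{cut-form} multipliers $\pi$ into the \emph{set-form} variables of \eqref{MatchingDualCardinality}: I would put $z_U := -2\pi(U)$ for every $U \in \oddSets$ with $|U| \geq 3$ and $z_{\{v\}} := 0$ for singletons, then $\gamma := 2\pi^*_{\max}$, and finally $y_v := \pi^*(v) - \pi^*_{\max}$ for every $v \in V$. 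If $\widetilde M$ is already perfect, then $k = \tfrac12|V|$ and the statement is the ordinary correctness of Edmonds' weighted perfect-matching algorithm, so assume $\widetilde M$ is not perfect; in particular exposed vertices exist.

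The sign constraints are immediate: \eqref{EquationMatchingDualSetNodeNonpositive} holds because $\pi^*(v) \leq \pi^*_{\max}$ by definition, and \eqref{EquationMatchingDualSetBlossomNonpositive} holds because $\pi(U) \geq 0$ for $|U| \geq 3$ by \eqref{EquationMatchingDualCutNonnegative}, while $z_{\{v\}} = 0$. For an edge $e = \{v,w\}$, since an odd set containing $e \in E[U]$ has $|U| \geq 3$, the left-hand side of \eqref{EquationMatchingDualSetEdge} equals $\pi^*(v) + \pi^*(w) - 2\sum_{U : v,w \in U}\pi(U)$, and the elementary laminar identity $\pi^*(v) + \pi^*(w) - 2\sum_{U : v,w \in U}\pi(U) = \sum_{U : e \in \delta(U)}\pi(U)$ rewrites this as $\sum_{U : e \in \delta(U)}\pi(U)$, which is $\leq w_e$ by \eqref{EquationMatchingDualCutEdge} — and holds with equality exactly when $e$ is $\pi$-tight. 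The value $\gamma = 2\pi^*_{\max}$ (equivalently, the common level $\pi^*_{\max}$ that $y$ measures slack down to) is the only choice consistent with the node complementary slackness conditions: by Corollary~\ref{TheoremUnmatchedMaximumPi} every exposed vertex $v$ must have $y_v = 0$, i.e.\ $\pi^*(v) = \gamma/2$.

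It then remains to verify complementary slackness between $x$ and $(y,z,\gamma)$, which I would do in three checks. First, if $x_e > 0$, i.e.\ $e \in \widetilde M$, then $e$ is $\pi$-tight — this is the complementary-slackness invariant maintained by the algorithm, which by Proposition~\ref{TheoremProperties}~\eqref{TheoremPropertiesBlossom}~and~\eqref{TheoremPropertiesAugmentation} survives every shrinking, deshrinking and augmentation — so by the previous paragraph \eqref{EquationMatchingDualSetEdge} is tight for $e$. Second, if $y_v \neq 0$, then $\pi^*(v) < \pi^*_{\max}$, so by Corollary~\ref{TheoremUnmatchedMaximumPi} (which relies on Condition~\ref{ConditionSameInit}) the vertex $v$ is matched, hence $x(\delta(v)) = 1$. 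Third, if $z_U \neq 0$, then $|U| \geq 3$ and $\pi(U) > 0$, so $\widetilde M$ is near-perfect on $U$ by Proposition~\ref{TheoremProperties}~\eqref{TheoremPropertiesBlossom}, i.e.\ $x(E[U]) = \frac{|U|-1}{2}$. With primal feasibility of $x$ this finishes the argument.

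I also expect a second proof by \emph{extending the graph} to work: adjoin $|V|-2k$ apex vertices to $G$, each joined to every original vertex by an edge of weight $\pi^*_{\max}$, extend $\pi$ by $0$ on the new singletons, and complete $\widetilde M$ to a perfect matching of the extended graph by matching the apex vertices to the exposed vertices of $\widetilde M$; one checks that this perfect matching together with the extended $\pi$ satisfies complementary slackness for the perfect-matching linear program (using the cut-form inequalities \eqref{EquationMatchingPrimalCutBlossom}) for exactly the same reasons as above, so it is of minimum weight in the extended graph, and subtracting the fixed apex contribution $(|V|-2k)\pi^*_{\max}$ yields cardinality-optimality of $\widetilde M$. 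In both proofs the bookkeeping — the sign checks, the $\delta$-versus-$E[U]$ identity, the constant-weight subtraction — is harmless; the real content is importing, \emph{for the deshrunken objects}, the two invariants that edges of $\widetilde M$ stay $\pi$-tight throughout and that \emph{every} exposed vertex sits at accumulated value $\pi^*_{\max}$. The latter is precisely where Condition~\ref{ConditionSameInit} is indispensable: if exposed vertices were allowed to end at different levels, no single $\gamma$ (respectively no single apex-edge weight) could satisfy the node complementary slackness conditions simultaneously, and the theorem would genuinely fail.
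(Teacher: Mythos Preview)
Your proposal is correct and follows essentially the same approach as the paper: the same dual transformation $(\gamma,y,z) = (2\pi^*_{\max},\,\pi^*(\cdot)-\pi^*_{\max},\,-2\pi(\cdot))$, the same $\delta$-versus-$E[U]$ identity to verify \eqref{EquationMatchingDualSetEdge}, and the same three complementary-slackness checks invoking Proposition~\ref{TheoremProperties}~\eqref{TheoremPropertiesBlossom} and Corollary~\ref{TheoremUnmatchedMaximumPi}. Your alternative graph-extension proof is also the one the paper gives (attributed to Vygen), up to the cosmetic choice of putting weight $\pi^*_{\max}$ on the apex edges with dual $0$ on the apex singletons, whereas the paper puts weight $0$ on those edges and dual $-\pi^*_{\max}$ on the new singletons; the two are equivalent after the constant shift you describe.
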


Our proof strategy is to construct a dual feasible solution $y,z,\gamma$
and then proof complementary slackness.
Clearly, dual values for Inequalities~\eqref{EquationMatchingPrimalSetBlossom}
can be calculated from those of Inequalities~\eqref{EquationMatchingPrimalCutBlossom}
by following the transformation in the proof showing their equivalence. 
This leads to the following formulas:
  
\begin{equation}
  \label{DualTransformation} 
  \begin{array}{rcll}
    \gamma & := & 2 \pi^*_{\max} \\
    y_v & := & \pi^*(v) - \pi^*_{\max} & \forall v \in V \\
    z_U & := & \multicolumn{2}{c}{ \begin{cases} 
      -2 \pi(U)~~~~~ & \forall U \in \Omega \textrm{ with } |U| \geq 3 \\
      0 & \forall U \in \oddSets \textrm{ with } |U| = 1 \textrm{ or } U \notin \Omega
    \end{cases}}
  \end{array}
\end{equation}
Here, again, $\pi^*(v) := \sum_{U \in \Omega, v \in U} \pi(U)$ and $\pi^*_{\max} := \max_{v \in V} \pi^*(v)$.
Using this transformation we can easily proof Theorem~\ref{TheoremIntermediateCardOpt}.

\begin{proof}[Proof of Theorem~\ref{TheoremIntermediateCardOpt}]
Let $\pi(U)$ for $U \in \Omega$ be the dual values at the current stage of the algorithm,
and let $k := |\widetilde{M}|$ be the cardinality of the current matching.
We construct a solution to the dual LP
by applying the transformation \eqref{DualTransformation}.
  
We first prove that $(y,z,\gamma)$ is feasible for \eqref{MatchingDualCardinality}.
Clearly $y \leq \zerovec[V]$ and $z \leq \zerovec[\oddSets]$.
Furthermore, observe for every edge $e = \setdef{u,v}$ the relation
\begin{align}
  \label{EquationNewDualInequality}
  y_u + y_v + \hspace{-2mm} \sum\limits_{\substack{U \in \,\oddSets \\ e \in E[U]}} \hspace{-2mm} z_U + \gamma
  = \sum\limits_{\substack{U \in \,\oddSets \\ u \in U}} \pi(U)
  + \sum\limits_{\substack{U \in \,\oddSets \\ v \in U}} \pi(U)
  - 2 \hspace{-2mm} \sum\limits_{\substack{U \in \,\oddSets \\ e \in E[U]}} \hspace{-2mm} \pi(U)
  = \hspace{-2mm} \sum\limits_{\substack{U \in \,\oddSets \\ e \in \delta(U)}} \hspace{-2mm} \pi(U)
  &\leq w_e
\end{align}
is satisfied, proving feasibility.
It also proves that Inequality~\eqref{EquationMatchingDualSetEdge} is satisfied with equality
for edges $e$ with $x_e > 0$ since in this case the algorithm ensures that also
Inequality~\eqref{EquationMatchingDualCutEdge} is satisfied with equality.

Second, assume $z_U < 0$, that is, $\pi(U) > 0$ for some $U \in \oddSets$ with $|U| \geq 3$.
Then, by Proposition~\ref{TheoremProperties}~\eqref{TheoremPropertiesBlossom},
the corresponding Inequality \eqref{EquationMatchingPrimalSetBlossom} is tight.
Third, if $x(\delta(v)) < 1$ holds, i.e., $v$ is not matched by $\widetilde{M}$,
then by Corollary~\ref{TheoremUnmatchedMaximumPi} we have $y_v = \pi^*(v) - \pi^*_{\max} = 0$.
This concludes the proof.
\end{proof}

\subsection*{Alternative Proof}

Jens Vygen suggested the following alternative proof idea (personal communication).
Let $v_1, \ldots, v_k$ be all unmatched nodes at the current stage.
We construct an auxiliary graph $G' = (V', E')$ with $V' = V ~\cupdot~ \setdef{ u_1, \ldots, u_k }$
and $E' = E ~\cupdot~ \setdef{ \setdef{v,u_i} }[ v \in V, i \in [k] ]$ (see Figure~\ref{FigureAlternativeProof}).
We define weights $w'_e = w_e$ for $e \in E$ and $w'_e = 0$ for $e \in E' \setminus E$.
Then every matching $M$ in $G$ can be extended to a perfect matching $M'$ in $G'$
of the same weight.

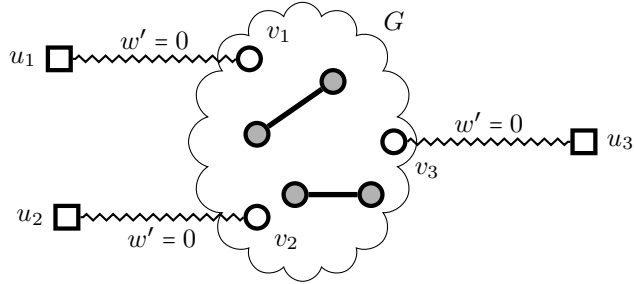
\begin{figure}[ht!]
  \begin{center}
  \begin{tikzpicture}
    \tikzset{
      exposed node/.style={circle,draw=black,ultra thick,fill=white,inner sep=0mm,minimum size=3.0mm},
      matched node/.style={circle,draw=black,ultra thick,fill=black!30!white,inner sep=0mm,minimum size=3.0mm},
      auxiliary node/.style={rectangle,draw=black,ultra thick,fill=white,inner sep=0mm,minimum size=3.0mm},
      matching edge/.style={line width=2pt},
      auxiliary edge/.style={thick,decorate,decoration={zigzag,amplitude=1.2pt,segment length=5pt}},
    }
    
    \node[matched node] (m1a) at (0,0.3) {};
    \node[matched node] (m1b) at (1,1) {};
    \node[matched node] (m2a) at (0.5,-0.5) {};
    \node[matched node] (m2b) at (1.5,-0.5) {};
    \draw[matching edge] (m1a) -- (m1b);
    \draw[matching edge] (m2a) -- (m2b);

    \node[exposed node] (v1) at (-0.1,+1.3) {};
    \node[exposed node] (v2) at (0,-0.8) {};
    \node[exposed node] (v3) at (1.8,+0.2) {};

    \node[cloud, cloud puffs=16, draw, minimum width=30mm, minimum height=37mm] at (0.6,0.2) {};
    \node at (1.8,1.8) {$G$};

    \node[auxiliary node] (u1) at (-2.6,+1.3) {};
    \node[auxiliary node] (u2) at (-2.5,-0.8) {};
    \node[auxiliary node] (u3) at (+4.3,+0.2) {};
    \draw[auxiliary edge] (v1) -- node[auto,swap] {$w' = 0$} (u1);
    \draw[auxiliary edge] (v2) -- node[auto] {$w' = 0$} (u2);
    \draw[auxiliary edge] (v3) -- node[auto] {$w' = 0$} (u3);

    \node[above right=-1pt of v1] {$v_1$};
    \node[below right=-1pt of v2] {$v_2$};
    \node[below right=1pt of v3] {$v_3$};
    
    \node[left=0pt of u1] {$u_1$};
    \node[left=0pt of u2] {$u_2$};
    \node[right=0pt of u3] {$u_3$};
  \end{tikzpicture}
  \end{center}
  \caption{Alternative proof of Theorem~\ref{TheoremIntermediateCardOpt} via auxiliary graph.}
  \label{FigureAlternativeProof}
\end{figure}

The extended version $\widetilde{M}'$ of the current matching $\widetilde{M}$
(from the algorithm) is a minimum perfect matching using the following dual values:
\begin{equation*}
  \begin{array}{rcll}
    \pi'(U) & := & \pi(U) & \forall U \in \Omega \\
    \pi'(\setdef{u_i}) &:=& -\pi^*_{\max} &\forall i \in [k]
  \end{array}
\end{equation*}
Then $\pi'$ is dual feasible since $\pi$ was dual feasible in $G$
and since $\pi^*(v_i) = \pi^*_{\max}$ for all $i \in [k]$ we have
$\pi^*(v_i) + \pi^*(u_i) = 0 = w'_e$ for all edges $e = \setdef{v_i, u_i}$, $i \in [k]$.
Furthermore, these edges are tight and hence $\widetilde{M}'$ consists of only tight edges.
Clearly all $U$ have some leaving edges since $\widetilde{M}'$ is perfect, that is,
the complementary slackness conditions are satisfied, too.

\pagebreak[3]
\section{Discussion}
\label{SectionDiscussion}

From our main theorem it readily follows that one can run the algorithm on any graph
and obtain a cardinality-optimal matching of every possible cardinality.
From those we can then select the optimum matching and hence solve the minimum weight matching problem
using the minimum weight perfect matching algorithm.
This is different from the usual construction auf an auxiliary graph $\widetilde{G}$ 
of twice the size in order to solve a matching problem on $G$
by solving a perfect matching problem on $\widetilde{G}$.

\medskip

Unfortunately, Condition~\ref{ConditionSameInit} seems to be unattractive for practical usage.
Cook and Rohe \cite{CookR98} invented the idea 
not to modify the dual values by the same amount on all nodes in question
in order to allow greater changes in different parts of the graph.
Since then, for state of the art implementations (e.g. \emph{Blossom V}, \cite{Kolmogorov09})
our imposed condition does not hold. 

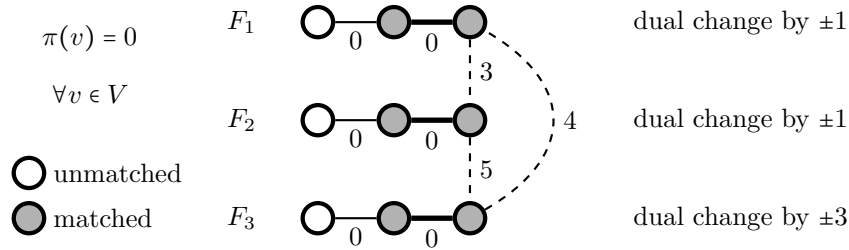
\begin{figure}[ht!]
  \begin{center}
  \begin{tikzpicture}
    \tikzset{
      exposed node/.style={circle,draw=black,ultra thick,fill=white,inner sep=0mm,minimum size=4.0mm},
      matched node/.style={circle,draw=black,ultra thick,fill=black!30!white,inner sep=0mm,minimum size=4.0mm},
      present edge/.style={thick},
      matching edge/.style={line width=2pt},
      missing edge/.style={thick,dashed},
    }

    \node[matched node] (f1a) at (0,+1.3) {};
    \node[matched node] (f1b) at (-1,+1.3) {};
    \node[exposed node] (f1c) at (-2,+1.3) {};
    \node[left of=f1c] {$F_1$};
    \node[right=18mm of f1a] {dual change by $\pm1$};
    
    \node[matched node] (f2a) at (0,0) {};
    \node[matched node] (f2b) at (-1,0) {};
    \node[exposed node] (f2c) at (-2,0) {};
    \node[left of=f2c] {$F_2$};
    \node[right=18mm of f2a] {dual change by $\pm1$};
    
    \node[matched node] (f3a) at (0,-1.3) {};
    \node[matched node] (f3b) at (-1,-1.3) {};
    \node[exposed node] (f3c) at (-2,-1.3) {};
    \node[left of=f3c] {$F_3$};
    \node[right=18mm of f3a] {dual change by $\pm3$};

    \draw[matching edge] (f1a) -- node[auto] {$0$} (f1b);
    \draw[matching edge] (f2a) -- node[auto] {$0$} (f2b);
    \draw[matching edge] (f3a) -- node[auto] {$0$} (f3b);
    \draw[present edge]  (f1b) -- node[auto] {$0$} (f1c);
    \draw[present edge]  (f2b) -- node[auto] {$0$} (f2c);
    \draw[present edge]  (f3b) -- node[auto] {$0$} (f3c);
    \draw[missing edge]  (f1a) to node[auto] {$3$} (f2a)
                               to node[auto] {$5$} (f3a)
                               to[bend angle=60,looseness=1.5,bend right] node[auto,swap] {$4$} (f1a);

    \node at (-5,1) {\begin{minipage}{20mm}$$\pi(v) = 0$$ $$\forall v \in V$$\end{minipage}};
    
    \node[anchor=west] at (-5.6,-1.3) {matched};
    \node[matched node] at (-5.8,-1.3) {}; 
    
    \node[anchor=west] at (-5.6,-0.7) {unmatched};
    \node[exposed node] at (-5.8,-0.7)  {}; 
  \end{tikzpicture}
  \end{center}
  \caption{An example showing necessity of Condition~\ref{ConditionSameInit}.}
  \label{FigureConditionNecessary}
\end{figure}

In the following example (see Figure~\ref{FigureConditionNecessary})
we have three alternating forests of which all edges have weight $0$.
All dual variables are $0$ as well.
According to the above mentioned ``variable dual changes'' approach we can,
for whatever reason, modify the values of the forests $F_1$, $F_2$, $F_3$ by
$1$, $1$, and $3$, respectively.
In the next step only the edge of weight $4$ has become tight, leading to a suboptimal augmentation.

\bigskip

\noindent
\textbf{Ackknowledgements.} We are thankful to Jens Vygen for comments
on an earlier version of the manuscript, among those the alternative proof
of Theorem~\ref{TheoremIntermediateCardOpt} given above.

\bibliographystyle{abbrv}
\bibliography{references}

\begin{thebibliography}{1}

\bibitem{Chvatal75}
V.~Chv{\'a}tal.
\newblock On certain polytopes associated with graphs.
\newblock {\em Journal of Combinatorial Theory, Series B}, 18(2):138 -- 154,
  1975.

\bibitem{CookR98}
W.~Cook and A.~Rohe.
\newblock Computing minimum-weight perfect matchings.
\newblock {\em INFORMS J. on Comp.}, 11:138--148, 1998.

\bibitem{Edmonds65b}
J.~Edmonds.
\newblock Maximum matching and a polyhedron with 0, 1-vertices.
\newblock {\em Journal of Research of the National Bureau of Standards B},
  69:125--130, 1965.

\bibitem{Edmonds65a}
J.~Edmonds.
\newblock Paths, trees, and flowers.
\newblock {\em Canad. J. Math.}, 17:449--467, 1965.

\bibitem{Kolmogorov09}
V.~Kolmogorov.
\newblock {Blossom V: a new implementation of a minimum cost perfect matching
  algorithm}.
\newblock {\em Math. Program. Ser. C}, 1:43--67, 2009.

\bibitem{KorteV12}
B.~Korte and J.~Vygen.
\newblock {\em Combinatorial Optimization}, volume~21.
\newblock Springer, fifth edition, 2012.

\bibitem{Lawler76}
E.~Lawler.
\newblock {\em Combinatorial Optimization: Networks and Matroids}.
\newblock Saunders College Publishing, Fort Worth, 1976.

\bibitem{Schrijver86}
A.~Schrijver.
\newblock {\em Theory of linear and integer programming}.
\newblock John Wiley \& Sons, Inc., New York, NY, USA, 1986.

\bibitem{Schrijver03}
A.~Schrijver.
\newblock {\em Combinatorial Optimization - Polyhedra and Efficiency}.
\newblock Springer, 2003.

\end{thebibliography}

\end{document}